\documentclass[a4paper,10pt]{article}
\usepackage[utf8]{inputenc}
\usepackage[english]{babel}
\usepackage[left=3cm,right=3cm,bottom=3cm,top=3cm]{geometry}
\usepackage{graphicx}
\usepackage{amsthm}
\usepackage{amsfonts}
\usepackage{amsmath}
\usepackage{amssymb}

\binoppenalty=10000 
\relpenalty=10000
\hyphenpenalty=8000
\exhyphenpenalty=1000
\widowpenalty=1000
\widowpenalties 2 1000 0

\newtheorem{theorem}{Theorem}
\newtheorem{Lemma}{Lemma}
\newtheorem{Proposition}{Proposition}
 \theoremstyle{definition}
\newtheorem{Definition}{Definition}

\begin{document}

\title{Extending de Bruijn sequences to larger alphabets}

\author{\begin{tabular}{cc}
Ver\'onica Becher \hspace{1cm}& \hspace{1cm}Lucas Cort\'es\\
{\small vbecher@dc.uba.ar}\hspace{1cm} &\hspace{1cm} {\small lucascortes@me.com}
\end{tabular}\\
\small Departamento de  Computaci\'on,
Facultad de Ciencias Exactas y Naturales \& ICC \\
\small Universidad de Buenos Aires \&  CONICET, Argentina 
}

\maketitle

\begin{abstract}
A  de Bruijn sequence of order $n$ over a  $k$-symbol alphabet 
 is a circular sequence where each length-$n$ sequence  occurs exactly once. 
We present a way of extending  de Bruijn sequences by 
adding a new symbol to the alphabet: 
the extension is performed by embedding a given de Bruijn  
sequence into another one of the same order, but over the alphabet with one more symbol, 
while  ensuring that there are no long runs without the new symbol.  
Our solution  is based on auxiliary graphs derived from the de Bruijn graph
and solving a problem of maximum~flow.
\end{abstract}

\noindent
{\bf Keywords}:  de Bruijn sequences,   Eulerian cycle, maximum flow, combinatorics on words.

{\small \tableofcontents}

\section{Introduction and statement of results}

A circular sequence is the equivalence class of a sequence under rotations. 
A  de Bruijn sequence of order $n$  over a $k$-symbol alphabet 
is a circular sequence of length $k^n$  in which every  length-$n$ sequence 
 occurs exactly once~\cite{db46,saintemarie},  see~\cite{berstel} for a fine presentation and history. 
For example,  writing  $[abc]$ to denote the circular sequence formed by the rotations of $abc$, 
$ [0011]$ is de Bruijn of order~$2$ over the alphabet $\{0,1\}$.

A  subsequence of a sequence $a_1 a_2 \ldots a_n$ is a sequence 
$b_1 b_2 \ldots b_m$ defined by $b_i = a_{n_i}$ for $i=1, 2, \ldots, m$, where $n_1 \leq n_2 \leq \ldots \leq n_m$. 
 The same applies to circular sequences, assuming any starting position. 
For example, for the alphabet of digits from $0$ to $9$, 
  $[123]$, $[246]$ and [5612] are subsequences of $[123456]$.

Clearly,  for any given de Bruijn sequence over a $k$-symbol  alphabet 
there is another one  over  the  alphabet enlarged with one  new symbol, 
 such that the two sequences have the same order, 
and the first  is a subsequence of  the second. 
This is immediate from the characterization of de Bruijn sequences 
as Eulerian cycles on de Bruijn graphs: the de Bruijn graph 
for the original alphabet is a sugbgraph of the de Bruijn graph 
for the enlarged alphabet, 
and  any cycle in an Eulerian graph can be embedded into a full Eulerian cycle. 
For instance, such an extension can be constructed  with  Hierholzer's algorithm for joining 
cycles together to create an Eulerian cycle of a graph.
However, this gives no guarantee that  the new symbol  
is fairly distributed along the resulting de Bruijn sequence.

In this note we consider the problem of extending a de Bruijn sequence 
over a $k$-symbol alphabet  to another one  of the same order  over 
the alphabet enlarged with a new symbol,
such that the first is a subsequence of the second 
  {\em and there are no long runs without the new symbol}.
If  in between every two successive occurrences of the new symbol there were  fewer than $n$ symbols, 
it would be impossible
 to  accommodate all words of length $n$ lacking  the new symbol.
If  there were exactly~$n$ symbols, 
 to accommodate all words of length~$n $ lacking the new symbol
we would need~$(n+1) k^n$ symbols. 
But this would be  impossible because for all sufficiently large values of $n$  this quantity exceeds~$(k+1)^n$,
the length of a de Bruijn sequence of order~$n$ over a $(k+1)$-symbol alphabet. 
Theorem~\ref{thm:1} proves that there is an extension that  in between 
any two successive occurrences of the new symbol there can be 
at most~$n+2k-2$ other symbols.

\begin{theorem}   \label{thm:1}
For any  de Bruijn sequence $v$ over a $k$-symbol alphabet of order~$n$  there is another one~$w$
 over  that  alphabet enlarged with a new symbol,  of the same order $n$, 
such that~$v$ is a subsequence of~$w$  and for any $n + 2k-1$ consecutive symbols in~$w$ 
there is at least one occurrence of the new symbol.
\end{theorem}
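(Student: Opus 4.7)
The plan is to view $v$ as an Eulerian cycle in the de Bruijn graph $B_{k,n}$ of the original alphabet (whose vertices are words of length $n-1$ and whose edges are words of length $n$) and to extend it to an Eulerian cycle $w$ in the corresponding graph $B_{k+1,n}$ over the enlarged alphabet by interleaving $v$ with ``detours'' through the new edges (those containing the new symbol $\star$). Let $H$ be the subgraph of $B_{k+1,n}$ formed by the $\star$-edges. In $H$, each non-$\star$ vertex $u=a_1\cdots a_{n-1}$ has exactly one out-edge (writing $\star$) and one in-edge (writing $a_{n-1}$), while every $\star$-vertex has in-degree and out-degree $k+1$. A \emph{detour at $u$} is a closed trail in $H$ starting and ending at $u$ that uses its two $\star$-edges.

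A short analysis of vertex labels pins down the shape of every detour. Just after its last $\star$-write the trail lies at a vertex ending in $\star$, so from there only non-$\star$ writes are possible and the detour must finish with exactly $n-1$ non-$\star$ symbols $a_1,\ldots,a_{n-1}$ (shifting that $\star$ out and returning to $u$). Similarly, any two consecutive $\star$-writes inside a detour are separated by at most $n-2$ non-$\star$ writes, since otherwise the trail would leave $H$.

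Next I would partition the edges of $H$ into $k^{n-1}$ detours, one at each non-$\star$ vertex. Any pairing of the $k+1$ in-edges with the $k+1$ out-edges at every $\star$-vertex induces an edge decomposition of $H$ into closed trails, but careless pairings produce trails confined to $\star$-vertices or trails covering two or more non-$\star$ vertices. I would formalise the existence of good pairings as a feasibility problem on an auxiliary graph derived from $H$ and solve it via maximum flow, absorbing any purely-$\star$ cycle into a genuine detour by swapping pairings at a shared $\star$-vertex (relying on the strong connectivity of the $\star$-part of $H$). I expect this decomposition step to be the main obstacle.

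Finally I would place the detours inside $v$. Partition its $k^n$ cyclic positions into $k^{n-1}$ consecutive bins of $k$ positions each, and form the bipartite graph whose edges join a bin $B$ to a vertex $u$ when $u$ is the source of some edge in $B$. Hall's condition is immediate by double counting: any family $\mathcal{S}$ of bins carries $k\,|\mathcal{S}|$ visits, and since each vertex of $B_{k,n}$ has exactly $k$ visits in $v$, at least $|\mathcal{S}|$ distinct vertices must appear in $\mathcal{S}$. A perfect matching (again a max-flow witness) assigns each vertex to a bin and thus to a specific visit at which to insert its detour. Two consecutive insertion points then lie in adjacent bins and so differ in $v$-position by at most $2k-1$, making the gap between the last $\star$ of one detour and the first $\star$ of the next at most $(n-1)+(2k-1)=n+2k-2$; internal detour gaps are strictly smaller. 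Hence every window of $n+2k-1$ consecutive symbols of $w$ contains a $\star$.
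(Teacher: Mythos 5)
Your overall strategy is the same as the paper's: realize $v$ as an Eulerian cycle in the de Bruijn graph on words of length $n-1$, decompose the edges containing the new symbol into one closed ``detour'' (the paper calls them petals) per original vertex, place the detours using a perfect matching between vertices and $k^{n-1}$ consecutive bins of $k$ edges obtained from Hall's theorem via max flow, and then do the gap arithmetic. Your structural analysis of $H$ (in/out-degrees, the forced final stretch of $n-1$ non-$\star$ writes, internal gaps of at most $n-2$) and the final bound $(n-1)+(2k-1)=n+2k-2$ are correct and match the paper.

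The genuine gap is exactly the step you flag yourself: you never prove that the $\star$-edges can be partitioned into $k^{n-1}$ edge-disjoint closed trails, one through each non-$\star$ vertex and through no other. This is not a routine detail: an arbitrary pairing of in- and out-edges at the $\star$-vertices can produce a trail containing two non-$\star$ vertices whose two segments share no $\star$-vertex at which to re-pair, so the ``swap at a shared vertex'' repair move is not obviously available, and without exactly one detour per vertex the counting that feeds Hall's theorem (and hence the $2k-1$ bound between insertion points) collapses. The paper closes this with its Petals tree, and the underlying fact you need is the following. The edges of $H$ are partitioned into simple cycles indexed by the circular words of length $n$ containing $\star$; a word with exactly one $\star$ has the form $[u\star]$ and its cycle meets exactly one non-$\star$ vertex, namely $u$, while a word with $d\ge 2$ occurrences of $\star$ yields a cycle meeting no non-$\star$ vertex. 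Moreover, if $c$ has $d\ge 2$ occurrences of $\star$, replacing one occurrence, say at position $i$, by an old symbol gives a word $c'$ with $d-1$ occurrences whose cycle shares with that of $[c]$ the vertex read off from the length-$(n-1)$ window avoiding position $i$. Hence every such cycle can be attached, through a shared vertex, to a cycle with one fewer $\star$; choosing one such parent for each word with $d\ge 2$ yields a forest rooted at the words $[u\star]$, and splicing the cycles of each tree together at the shared vertices produces exactly the desired detour for each $u$. With this lemma supplied, the rest of your argument goes through and coincides with the paper's proof.
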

For example, for this  de Bruijn  sequence of order~$3$  over  the alphabet $\{0,1\}$,
\[v = [11000101]
\]
the following  de Bruijn sequence of order~$3$  over the alphabet $\{0,1,2\}$
 satisfies the conditions of the theorem: 
\[
w = [122212111002202000120102101]
\]
because $v$ is a subsequence of $w$ and given any $n+2k-1 = 6$  consecutive symbols in $w$ 
there is at least one occurrence of the symbol~$2$.

To prove Theorem \ref{thm:1}, in addition to  classical elements from graph theory such as de Bruijn graphs, 
Eulerian cycles and graph transformations, we pose the  fairness condition   on the new symbol
as a  problem of maximum flow 
and solve it with Edmonds-Karp algorithm~\cite{ek,Cormen}.
The following is a crude  upper bound of the complexity of the construction.

\begin{Proposition}\label{prop:1}
For order $n$ and every $k$-symbol alphabet there  is a construction that proves Theorem \ref{thm:1} 
in  $O(k^{3n-2})$ mathemtical operations.
\end{Proposition}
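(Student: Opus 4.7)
The plan is to walk through each step of the construction given in (the proof of) Theorem~\ref{thm:1}, bound the number of operations it requires, and then observe that the maximum-flow computation dominates.

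First I would account for the cheap steps: (i)~enumerating the $(k+1)^{n-1}$ vertices and $(k+1)^n$ edges of $B_{k+1,n}$ and isolating the subgraph $B_{k,n}$ corresponding to the input; (ii)~reading $v$ as an Eulerian cycle in $B_{k,n}$; (iii)~assembling the auxiliary graph on which the fairness constraint is posed as a maximum-flow problem; and (iv)~once a feasible flow is in hand, recovering an Eulerian cycle of $B_{k+1,n}$ by Hierholzer-style merging of cycles. Each of these is linear in the edge count of $B_{k+1,n}$, hence costs $O(k^n)$ operations, and is negligible compared to the flow step.

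Next, I would focus on the Edmonds-Karp computation, whose running time is $O(V\cdot E^2)$ where $V$ and $E$ count vertices and edges of the auxiliary flow network. Since the auxiliary graph is derived directly from $B_{k+1,n}$ together with a source and a sink whose capacities encode the upper bound $n+2k-2$ on runs, both $V$ and $E$ are polynomial in $k^n$. A careful enumeration should yield $V\cdot E^2=O(k^{3n-2})$; substituting into the Edmonds-Karp bound delivers the stated complexity. Finally, under the standard RAM assumption that arithmetic on labels of size $O(k^n)$ costs $O(1)$, no other bookkeeping raises the total above $O(k^{3n-2})$.

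The main obstacle is exactly this careful counting. The naive estimate $V=\Theta(k^{n-1})$, $E=\Theta(k^n)$ produces $O(k^{3n-1})$, which is a factor of $k$ too large, so one must either define the auxiliary graph on a smaller vertex set (for instance one indexed by vertices of $B_{k+1,n-1}$, giving $V=O(k^{n-2})$), or sharpen the Edmonds-Karp analysis using the fact that only the $(k+1)^n-k^n=O(n\,k^{n-1})$ edges carrying the new symbol contribute non-trivial augmenting capacity, which caps either the total flow value or the length of BFS-shortest augmenting paths. Pinning down which of these refinements the construction exploits is the step I expect to require the most work; once it is settled, Proposition~\ref{prop:1} follows by substitution.
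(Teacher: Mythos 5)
Your skeleton is right --- every step of the construction except the maximum-flow computation is linear in the number of edges of the enlarged de Bruijn graph, so the flow step dominates --- but there is a genuine gap in the middle: you never identify the flow network that is actually being solved, and you explicitly leave the decisive count open, conceding that your naive estimate overshoots by a factor of $k$. The network in the paper is \emph{not} built on $G(k+1,n-1)$ with source/sink capacities encoding the run-length bound $n+2k-2$. It is the bipartite \emph{Distribution graph}: one vertex class is the $k^{n-1}$ vertices of the original graph $G(k,n-1)$, the other is the $k^{n-1}$ sections of $k$ consecutive edges into which the given Eulerian cycle is cut, all capacities are $1$, and the maximum flow realizes a perfect matching (whose existence follows from Hall's theorem, since the graph is $k$-regular bipartite). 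The matching only decides \emph{where} to insert each petal; the run-length guarantee then comes from the combinatorics of sections and petals, not from any capacity constraint. This gives $|V| = 2k^{n-1}+2$ and $|E| = (k+2)\,k^{n-1}$, and the paper substitutes these into the bound $O(|V|^2|E|)$ to obtain $O(k^{3n-2})$.

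Two further remarks. First, your suspicion that the arithmetic is delicate is well founded: with the textbook Edmonds--Karp bound $O(|V||E|^2)$ and the numbers above one gets $O(k^{3n-1})$, so the paper's stated exponent in fact relies on quoting the bound as $O(|V|^2|E|)$. Second, the refinement you gesture at in your closing paragraph --- bounding the number of augmenting iterations by the value of the maximum flow --- is exactly the right repair and you should carry it out: since every capacity is $1$ and the source has out-degree $k^{n-1}$, there are at most $k^{n-1}$ augmentations, each costing one breadth-first search over $O(k^n)$ edges, for a total of $O(k^{2n-1}) \subseteq O(k^{3n-2})$. That closes the gap and makes the proposition correct independently of which version of the Edmonds--Karp bound one quotes; but as written, your proposal stops short of establishing the claimed exponent.
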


It is possible to conceive this extension problem 
 in variants  of de Bruijn sequences  defined in terms of Eulerian cycles in approrpriate graphs.
For instance, the {\em semi-perfect} de Bruijn sequences of   Repke and  Rytter~\cite{RepkeRitter2018}
which satisfy that   each of the  prefixes (large enough) has the largest possible number of distinct words.
Or the {\em perfect}  sequences~\cite{ABFY} 
which, for order $n$,  contain each word of length~$n$ exactly~$n$ times 
but each one starting at different positions modulo~$n$.
Or the subtler {\em nested perfect} sequences~\cite{BC2019} originated in Mordachay Levin's~\cite[Theorem 2]{Levin1999}.

The extension to a larger alphabet 
without the fairness condition on the new symbol
is particularly simple for the  lexicographically greatest de Bruijn sequence:
the one over the original alphabet is   the suffix of the one of the enlarged alphabet~\cite{Thibeault},
assuming the new symbo is the lexicographically greatest.
The extension can be done with an efficient greedy algorithm, see~\cite{SSW}.

The extension problem to a larger alphabet that we consider in the present note
is dual to the extension problem studied by Becher and Heiber
 in~\cite{BH2011}, where they considered  extending 
a de Bruijn  sequence of order~$n$ over a $k$-symbol alphabet to 
another  one of order $n+1$ over the same alphabet such that 
the first is a prefix of the second.

\section{Proof of Theorem \ref{thm:1}} \label{sec:db}

In the sequel  we use the terms word and sequence interchangeably.
A de Bruijn graph $G(k, n)$ is a directed graph  whose vertices are 
the words of length  $n$ over a $k$-symbol alphabet 
and whose edges  are the  pairs $(v,w)$  where $v=au$ and $w=ub$, for some word $u$
of length $n-1$ and possibly two different symbols $a, b$.
Thus, the graph $G(k, n)$ has $k^n$ vertices and $k^{n+1}$ edges, 
it is strongly connected and every vertex
 has the same in-degree and out-degree. 
Each  de Bruijn sequence of order $n$ over a $k$-symbol alphabet can be constructed by taking a Hamiltonian 
cycle in~$G(k,n)$.
 Since the line graph of $G(k,n)$ is $G(k,n+1)$, 
each de Bruijn sequence of order $n+1$ over a $k$-symbol alphabet 
can be constructed as an Eulerian cycle in~$G(k,n)$. 

\subsection{Graph of circular words}

\begin{figure}[t!]
  \centering
   \includegraphics[width=0.6\textwidth]{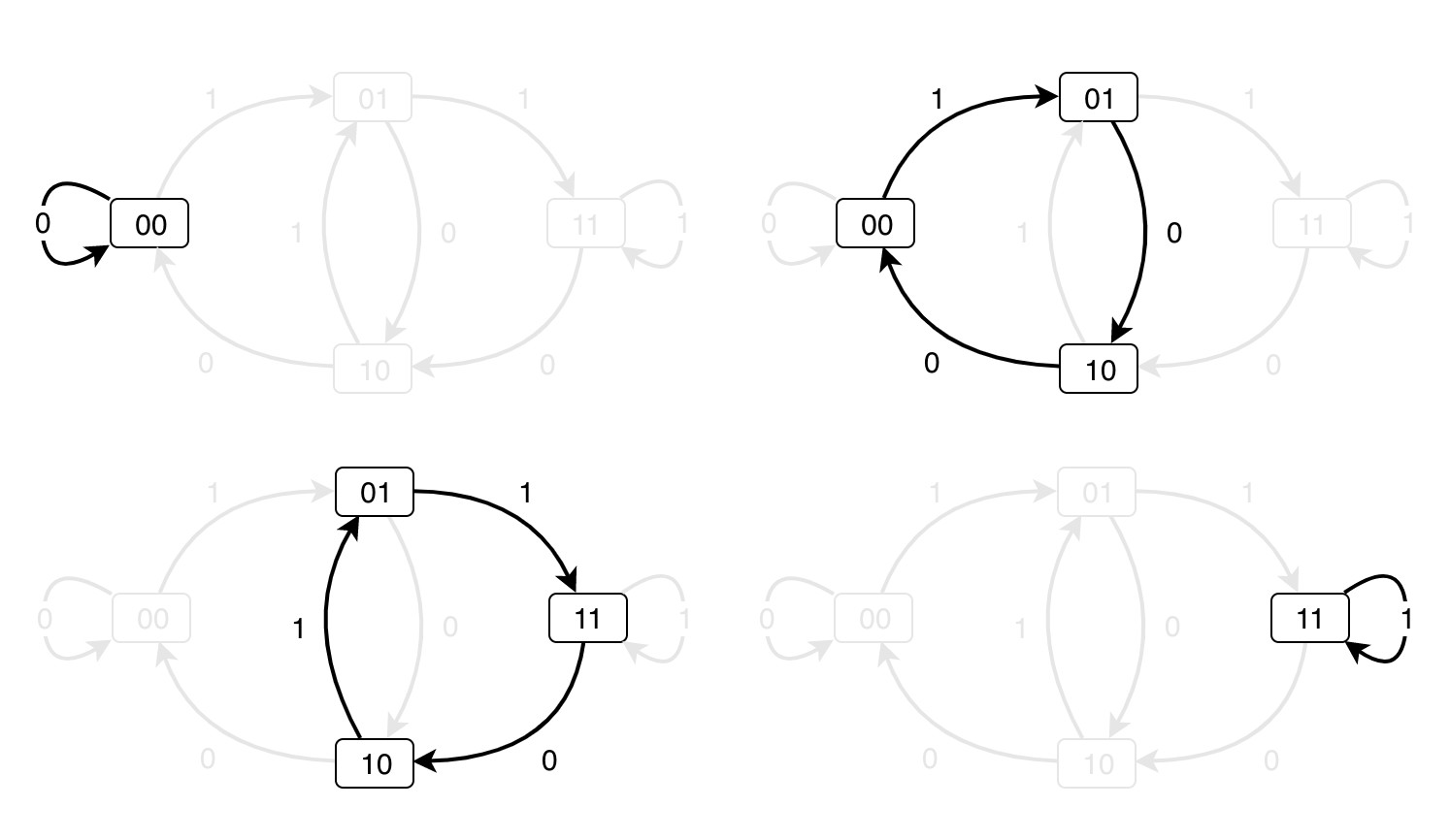}
  \caption{For  alphabet $\{0,1\}$  there are $4$ circular words of length $3$: 
$[000]$, $[100]$, $[110]$ and $[111]$, each corresponds to a simple  cycle in the de Bruijn graph $G(2, 2)$.}
  \label{fig:simple}
  \includegraphics[width=\textwidth]{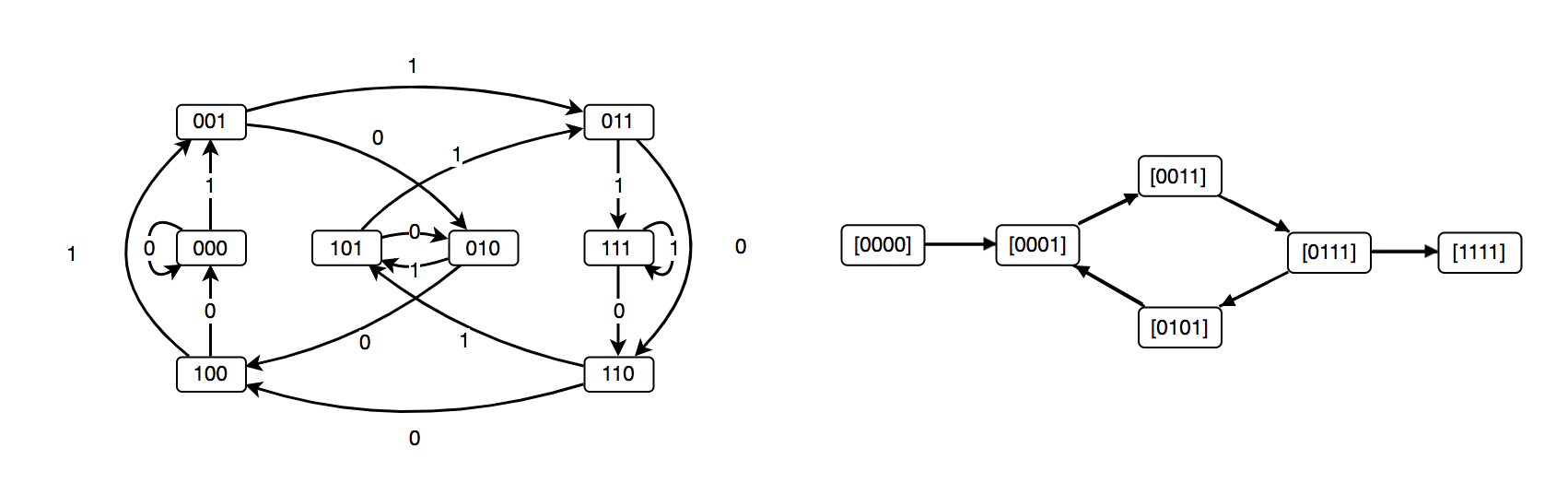}
  \caption{On the left $G(2,3)$. On the right graph $C(2,4)$.}
  \label{fig:C}
\includegraphics[width=0.45\textwidth]{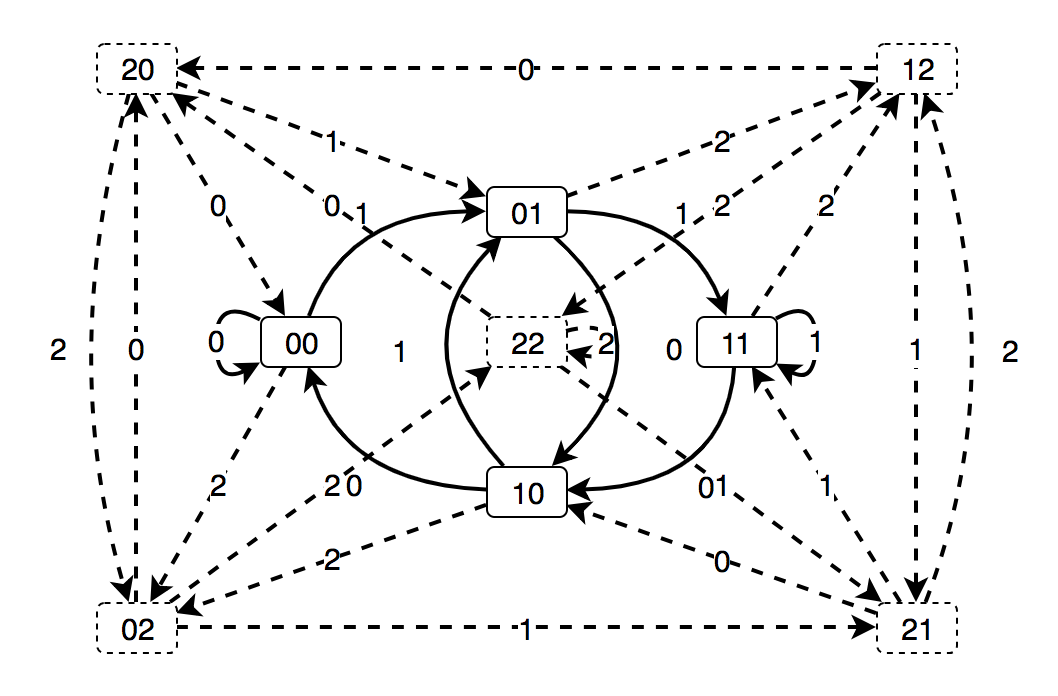}
  \caption{The de Bruijn  graph $G(2,2)$   is given by the solid  lines.
The Augmenting graph $A(3,2)$  consists of
all the vertices and just the dashed lines.}
 \label{fig:A}\end{figure}

Our main tool  is the  factorization of the set of edges in $G(k, n)$ 
in  convenient sets of pairwise disjoint  cycles. 
We say that two   cycles are  disjoint  if they have  no common edges.

\begin{Proposition}
For every, $k$ and $n$, the set of edges in $G(k, n)$ can be partitioned into a disjoint 
set of cycles identified by the circular words of length~$n+1$.
\end{Proposition}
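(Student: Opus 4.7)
The plan is to construct an explicit bijection between the edges of $G(k,n)$ and the linear words of length $n+1$ over the $k$-symbol alphabet, and then show that the edges corresponding to the rotations of a single circular word of length $n+1$ form a cycle in $G(k,n)$. Since the rotation equivalence partitions the linear words of length $n+1$ into circular words, this will automatically yield the desired partition of the edge set.

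First I would define the bijection in the natural way: the edge from vertex $au$ to vertex $ub$, with $u$ a word of length $n-1$, is labeled by the length-$(n+1)$ word $aub$; conversely, every word $a_1 a_2 \cdots a_{n+1}$ labels the unique edge from $a_1 \cdots a_n$ to $a_2 \cdots a_{n+1}$. The count $|E(G(k,n))| = k^{n+1}$ matches the number of length-$(n+1)$ words, confirming this is a bijection.

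Next, I would fix a circular word $c$ of length $n+1$ and let $p$ be its cyclic period, so $p \mid n+1$ and $c$ has exactly $p$ distinct linear representatives $r_1, r_2, \ldots, r_p$, where $r_{i+1}$ is the one-position rotation of $r_i$ and indices are read modulo $p$. For the edge $e_i$ associated with $r_i$, the head of $e_i$ (the last $n$ symbols of $r_i$) equals the tail of $e_{i+1}$ (the first $n$ symbols of $r_{i+1}$), because these two length-$n$ blocks are both the middle of the length-$(n+2)$ overlap of $r_i$ and its one-position rotation. Hence $e_1, e_2, \ldots, e_p, e_1$ is a closed walk, and since the $r_i$ are pairwise distinct by the choice of $p$, no edge is repeated, giving a genuine cycle. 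Running this construction over all circular words $c$ produces edge-disjoint cycles whose edge sets partition the length-$(n+1)$ words, hence the edges of $G(k,n)$.

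The only subtle point is handling circular words of nontrivial period, for instance $[0101]$ when $n=3$, whose associated cycle has length $p=2$ rather than $n+1=4$; once one records that each circular word of period $p$ contributes a cycle of length $p$, and notes that $\sum_c p_c = k^{n+1}$ summed over all circular words $c$ of length $n+1$, the partition follows without further case analysis.
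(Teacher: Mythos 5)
Your proof is correct and follows essentially the same route as the paper: identify each edge with the length-$(n+1)$ word obtained by concatenating the tail vertex with the edge label, observe that the rotations of such a word trace consecutive edges forming a cycle, and let the rotation classes induce the partition. Your explicit treatment of words with nontrivial period (where the cycle has length $p \mid n+1$ rather than $n+1$) is a detail the paper's proof glosses over, and is a welcome addition.
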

\begin{proof}
As usual, we identify an edge  in $G(k,n)$ by 
concatenating the starting vertex label with  the edge  label.
Thus, each edge in $G(k,n)$ is identified with  a word of length~$n+1$.
The set of all rotations of a word of length $n+1$ identifies  consecutive edges that form a simple cycle in $G(k,n)$.
And each circular word of length  $n+1$  corresponds  exactly  to one simple cycle in $G(k,n)$. 
The  partition of the set of  words of length $n+1$ in the equivalence classes given by their  rotations 
determines a partition of the set of edges in $G(k, n)$ into disjoint simple cycles, see Figure~\ref{fig:simple}.
\end{proof}
We  define  the  graph of circular words. Figure~\ref{fig:C} shows it for word length $3$ over~$\{0,1\}$.

\begin{figure}[t!]
 \centering
  \includegraphics[width=.4\textwidth]{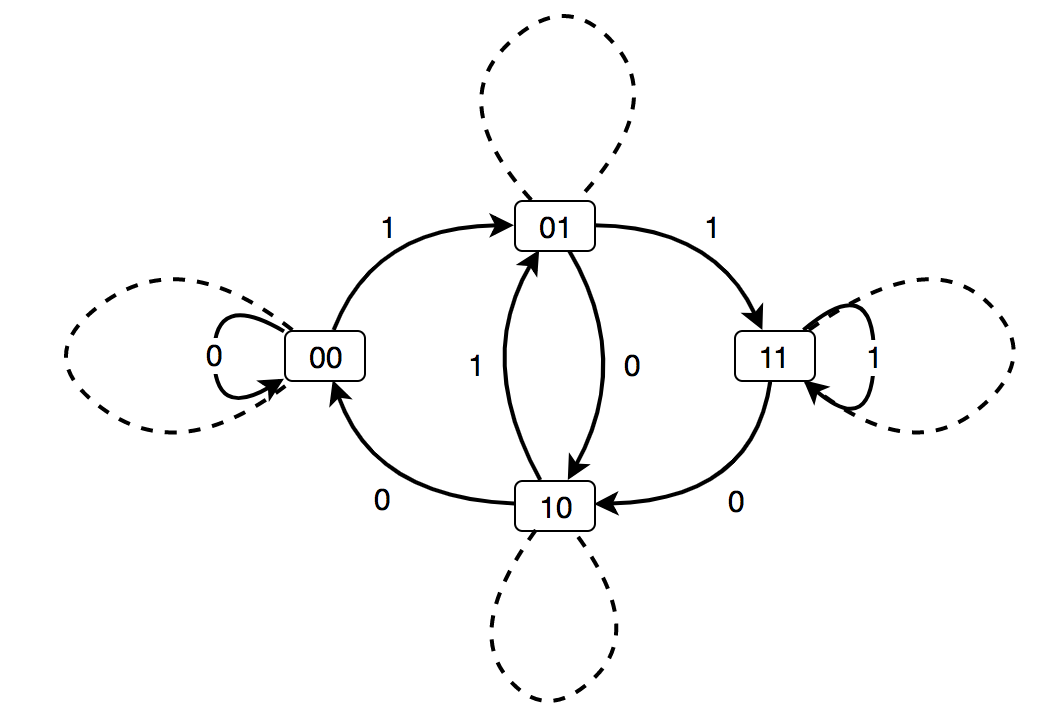}
  \caption{Petals for the vertices in $G(2,2)$. }
   \label{fig:petal}
  \includegraphics[width=\textwidth]{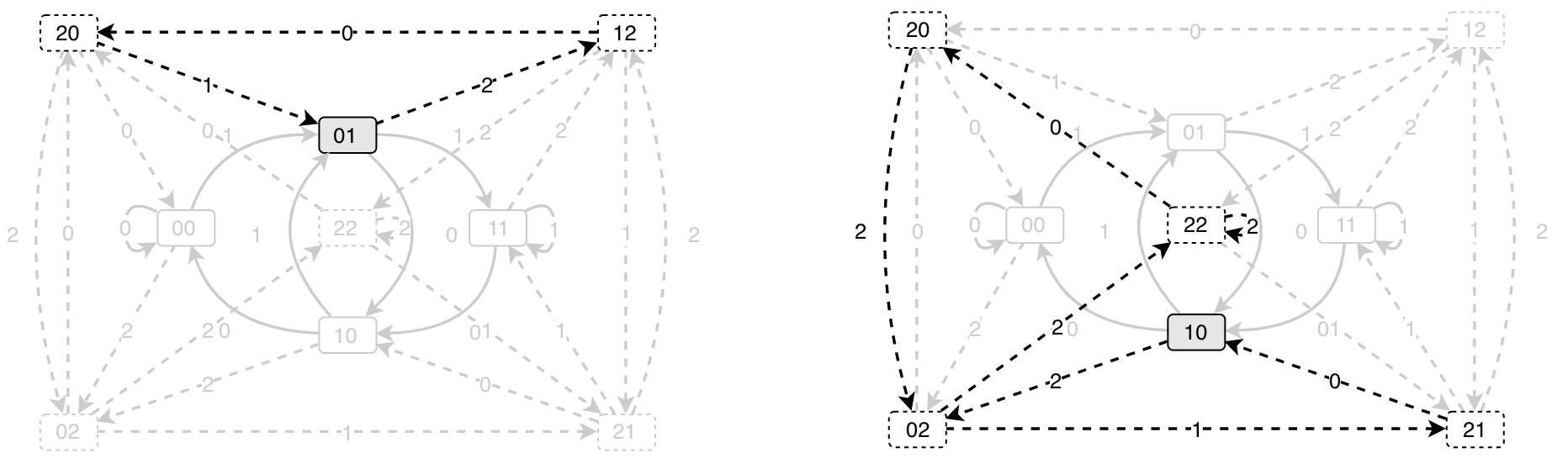}
  \caption{On the left,  the petal  for the vertex $01$, which  is just $[012]$. 
On the  right,  the petal  for the vertex $10$ which consists of the path  $[222]$,  $[202]$, $[021]$.}
  \label{fig:flower}
\end{figure}

\begin{Definition}[Graph of circular words]
For every  $k$ and $n$,   $C(k, n)$  is the graph whose  vertices are 
 the  circular words of length $n$ over the $k$-symbol  alphabet 
and two vertices  $[v]$ and $[w]$ are connected 
 if  there is a word $u$ of length $n-1$ and  symbols $a,b$ such that 
$[au]= [v]$,   $[ub]= [w]$.
\end{Definition}

The  fact that  $G(k,n)$ is a subgraph of $G(k+1,n)$ motivates the following definition.

\begin{Definition}[Augmenting graph]
The augmenting graph $ A(k+1, n)$ is the directed graph $(V, E)$ where 
$V$  is  the set of  length-$n$ words over the  alphabet enlarged by a new symbol $s$, 
and  $E$ is the set of pairs $(v,w)$  such that $v=au$, $w=ub$
for some word  $u$ of  length $n-1$  and  symbols $a,b$, 
and  either $v$ or $w$ have at least one occurrence of the symbol~$s$.
\end{Definition}

Figure \ref{fig:A} illustrates $A(3,2)$. Observe that in  $A(k+1,n)$ each of the vertices in $G(k,n)$ 
has exactly one incoming edge and exactly one outgoing edge.
This outcoming edge  is always labelled with the new symbol~$s$. 
To prove Theorem \ref{thm:1} we plan to construct an  Eulerian cycle in $G(k+1,n)$  by joining 
the given  Eulerian cycle in $G(k,n)$ with disjoint cycles of the augmenting graph $A(k+1,n)$ that we call {\em petals}.
Since the edges in $A(k+1,n)$ are exactly the edges in $G(k+1,n)$ minus those in $G(k,n)$, 
the edges in $A(k+1,n)$ can also be partitioned into a disjoint set of cycles which are identified
by  the circular words of length~$n+1$ that have at least one occurrence of the new symbol~$s$.
To define petals  we consider the  restriction of $C(k+1,n+1)$ to the simple cycles in~$A(k+1,n)$.

\begin{Definition}[Petal for a vertex in $G(k,n)$]
Let $\widetilde C(k+1, n+1)$  be the subgraph of $C(k+1, n+1)$ whose  set of vertices 
  are the circular words of length $n+1$ 
with at least one occurrence of symbol~$s$.
A {\em  petal} for a vertex $v$ in $G(k,n)$ is a subgraph of $\widetilde C(k+1,n+1)$ that seen as a cycle in $ A(k+1,n)$, 
 traverses  exactly one  vertex  in  $G(k, n)$, the vertex $v$.
\end{Definition}

There is exactly one  petal for each vertex $v$ in $G(k,n)$  and this petal  starts at the circular word~$[vs]$, 
where~$s$ is the new symbol.
Now there are two difficulties. 
One is to determine where to insert the petals so that we obtain a fair distribution of the new symbol~$s$.
The other difficulty is that petals must exhaust the augmenting graph~$A(k+1,n)$.
Figures~\ref{fig:petal} and \ref{fig:flower} illustrates petals for vertices in~$G(2,2)$.

\subsection{Fair distribution of the new symbol}

\begin{figure}[t!]
  \centering
  \includegraphics[width=0.35\textwidth]{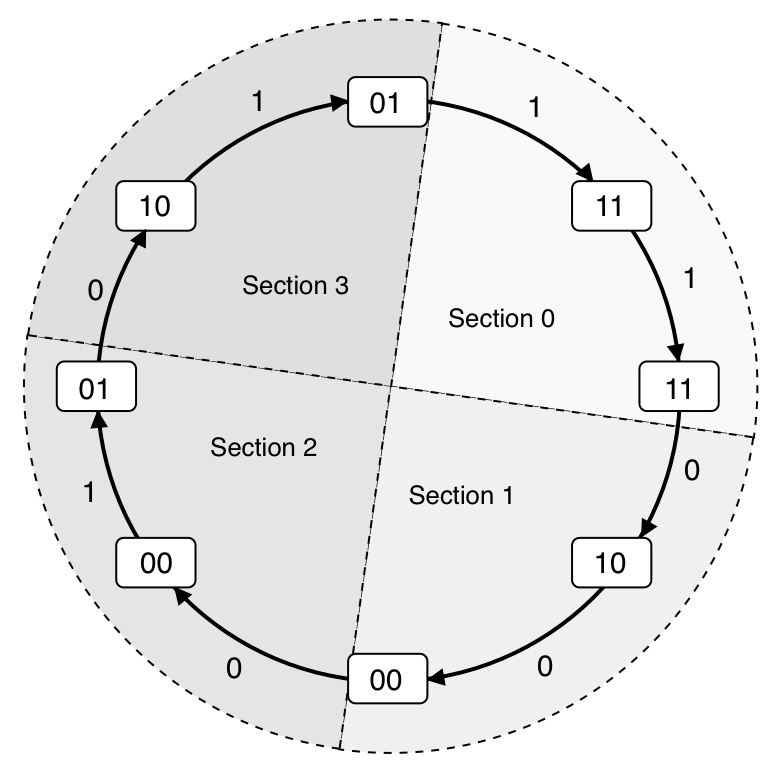}
  \caption{The Eulerian cycle in $G(2,2)$  given by $[11000101]$ started at  vertex $11$ 
 has 4 sections,  section $0$ is  $(11,11)$ , section 
$1$ is $(10,00)$,  section $2$ is $(00,01)$
and  section $3$ is $(10, 01)$.}
  \label{fig:sections}
\begin{tabular}{lr}
 \hspace{-.7cm} \includegraphics[width=0.5\textwidth]{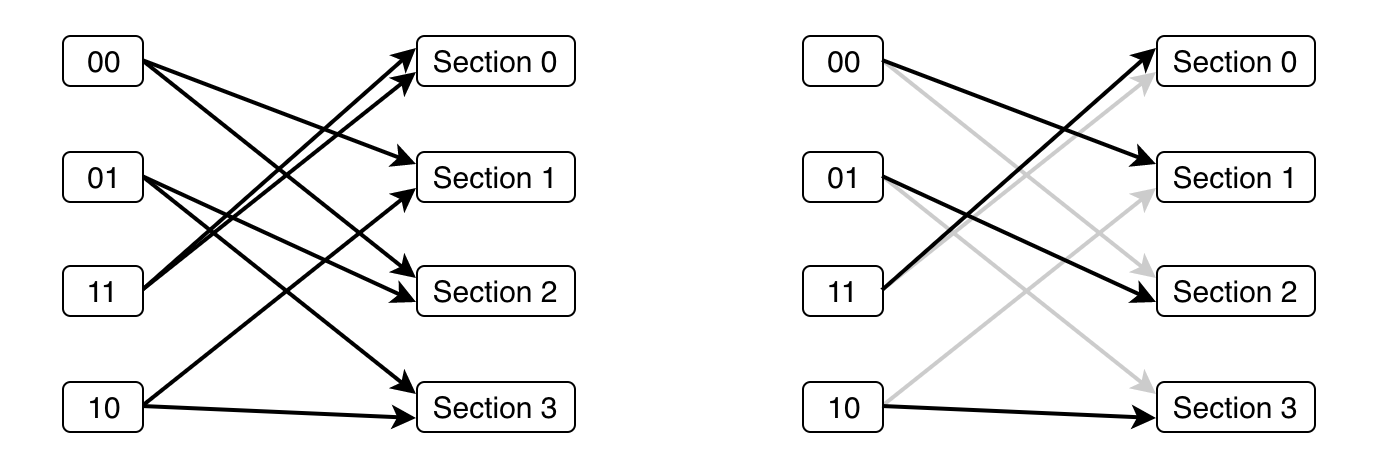}
&
 \hspace{-.5cm}   \includegraphics[width=0.4\textwidth]{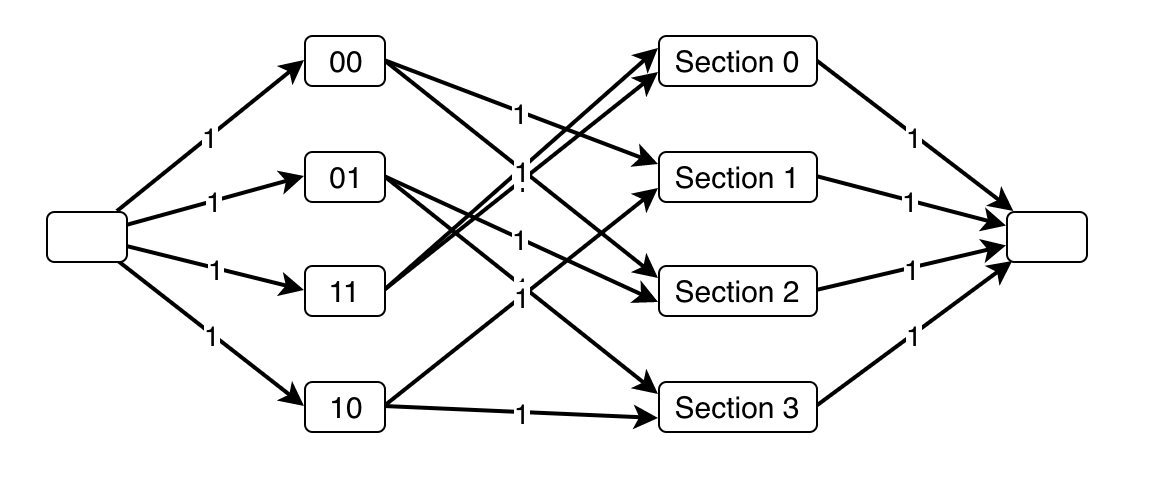}
\end{tabular}
 \caption{At the left, a  Distribution graph $D(2,2)$.
At the center, a possible perfect matching.
At the right,  the flow network for $D(2,2)$ where each edge has capacity $1$.}
\label{fig:D}
\end{figure}

A pointed cycle is a cycle with a specified starting edge.

\begin{Definition}[Section of a cycle]
For a pointed  Eulerian cycle in $G(k, n)$  given by the  sequence of  edges   $e_1, \ldots e_{k^{n+1}}$
and  a non-negative  integer $j$  such that $0\leq j< k^n$,
 the sequence  of vertices $v_{jk}, ..., v_{jk+k-1}$, where each $v_i$ is the head of $e_i$,
is  a section $j$ of the cycle.
\end{Definition}

Figure~\ref{fig:sections} exemplifies the four sections of an Eulerian cycle in~$G(2,2)$.
The de Bruijn  graph $G(k, n)$  has~$k^n$ vertices and~$k^{n+1}$ edges.
An Eulerian  cycle in  $G(k, n)$ has~$k^n$ sections with~$k$ vertices each section. 
Since  there are the  same number of vertices as sections 
we would like to choose one vertex from each section to place a petal. 
The problem is  each vertex  occurs  $k$ times in the Eulerian cycle  but not necessarily at $k$ different sections.
We pose it as a {\em matching} problem.

\begin{Definition}[Distribution graph]
Given pointed   Eulerian cycle  in  $G(k, n)$  
the {\em  Distribution graph} $D(k, n)$ is a $k$-regular bipartite graph 
where the two vertex classes
are the vertices in  $G(k, n)$ and the sections of  the Eulerian cycle
and  there is an  edge  $(v, j)$ if  $v$ belongs to the section~$j$.
\end{Definition}

A  matching  in  a graph $D$ is a set of edges such that no two edges share a common vertex. \linebreak
A vertex is  matched if it is an endpoint of one of the edges in the matching.
 A {\em perfect matching} is a matching that matches all vertices in the graph.

\begin{Lemma}
For every  Distribution graph $D(k,n)$ there is a perfect matching.
\end{Lemma}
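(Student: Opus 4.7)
The plan is to observe that $D(k,n)$ is a balanced $k$-regular bipartite graph and then invoke the classical fact that every such graph has a perfect matching.

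First, I would verify the counts on both sides of the bipartition. The vertex side has $k^n$ elements by definition of $G(k,n)$. For the section side, an Eulerian cycle in $G(k,n)$ traverses all $k^{n+1}$ edges, so it visits $k^{n+1}$ heads of edges; since each section consists of $k$ consecutive such vertices, there are exactly $k^{n+1}/k = k^n$ sections. So both classes have size $k^n$.

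Next, I would check $k$-regularity. Each vertex $v\in G(k,n)$ has in-degree $k$, so it occurs as a head of an edge exactly $k$ times along any Eulerian cycle; thus $v$ is the endpoint of exactly $k$ edges of $D(k,n)$. Each section-vertex has degree $k$ directly from the definition, since a section contains $k$ vertices (with multiplicity). So $D(k,n)$ is indeed $k$-regular bipartite.

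Finally, I would apply Hall's marriage theorem. For any subset $S$ of one side, the $k|S|$ edges incident to $S$ land in $N(S)$, and since each vertex on the other side has degree at most $k$, we have $k|N(S)|\geq k|S|$, hence $|N(S)|\geq|S|$. Hall's condition being satisfied on a balanced bipartite graph yields a perfect matching. I do not expect any genuine obstacle here: the content of the lemma is really the standard fact that regular bipartite graphs have perfect matchings, and the only mildly subtle step is the double-counting that confirms the two sides have equal size. Since the subsequent sections of the paper phrase the construction as a max-flow problem, an equally clean alternative is to add a source and sink with unit capacities and check that the minimum $s$-$t$ cut has value $k^n$, then invoke the max-flow/min-cut theorem; this would also feed directly into the Edmonds--Karp complexity bound used in Proposition~\ref{prop:1}.
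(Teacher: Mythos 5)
Your proof is correct and follows essentially the same route as the paper: both verify that $D(k,n)$ is a balanced $k$-regular bipartite graph and then apply Hall's marriage theorem via the standard edge-counting argument to obtain a perfect matching. The max-flow/min-cut alternative you mention is also consistent with how the paper actually computes the matching in practice.
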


\begin{proof}
Let $D$ be a finite bipartite graph consisting of 
 are  two disjoint  sets  of vertices $X$ and $Y$ with
 edges that  connect a vertex in $X$ to a vertex in $Y$.
For a subset $W$ of  $X$, let $N(W)$ be the  set of all vertices in $Y$ adjacent to some element in~$W$. 
Hall's marriage theorem~\cite{hallph} states that there is a matching that entirely 
covers $X$ if and only if for every subset $W$ in $X$, $|W| \leq |N(W)|$.
Consider a Distribution graph $D(k,n)$ and call   $X$ to   the set of vertices $G(k,n)$ and  $Y$ to  the set of sections.
For any  $W\subseteq X$ such that $|W| = r$, the sum of the out-degree of these $r$ vertices is $rk$. 
Given that the in-degree for any vertex in $Y$ is~$k$, we have that $|N(W)| \geq r$. 
Then, there is a matching that entirely covers $X$. 
Furthermore,  since the number of vertices is equal to the number of sections, 
 $|X| = |Y|$ and  the matching is perfect.
\end{proof}

To obtain a perfect matching in a Distribution 
graph we can use any method to compute the maximum flow in a network. 
We define the flow  network by adding
adding  two vertices  to the  Distribution graph,  the source and the sink.
Add an edge from the source  to each vertex in $X$ 
and add an edge from each vertex in $Y$ to the sink. 
Assign capacity~$1$ to each of the edges of the flow network. 
The maximum flow of the network is $|X|$.
This flow has the edges of a perfect match.
Figure \ref{fig:D} shows a Distribution graph $D(2,2)$, 
a possible perfect matching, and the flow network used to obtain it.

\subsection{Partition of the augmenting graph}
\begin{figure}
  \centering
  \includegraphics[width=0.4\textwidth]{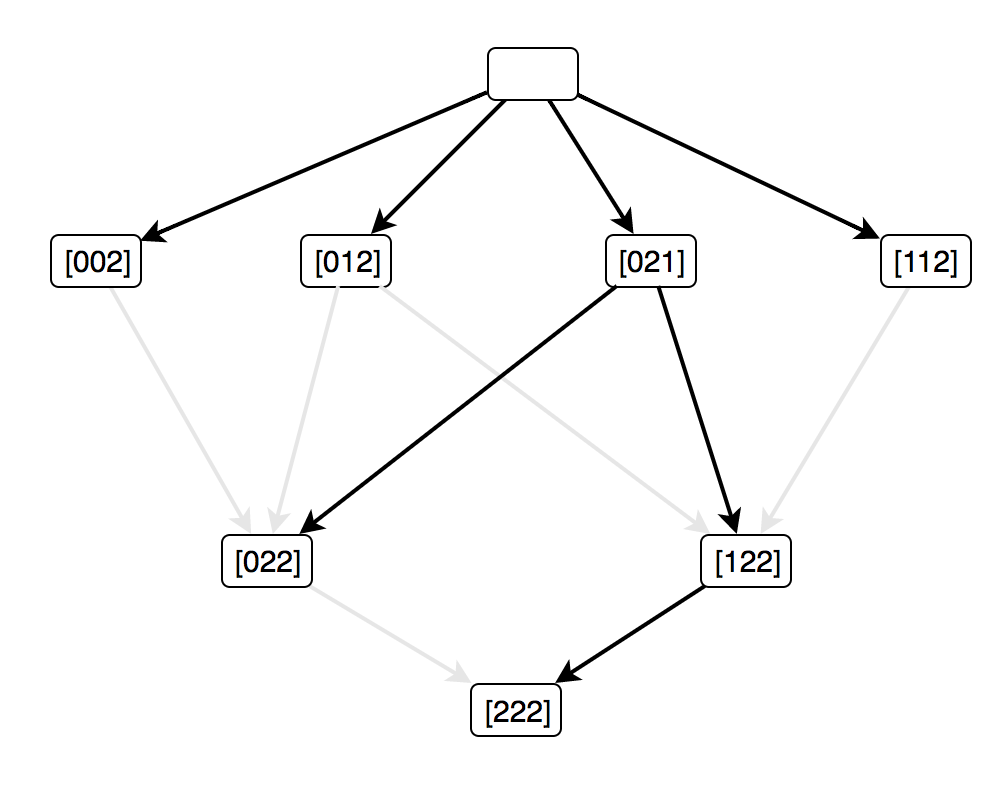}
  \caption{A  petals tree with four petals, one for each vertex of $G(2, 2)$.}
  \label{fig:tree}
\end{figure}

We must partition the set  of  edges in $ A(k+1,n)$ into petals.
We define a {\em Petals tree} as a  root  that branches out in a  subgraph of~$\widetilde C(k+1,n+1)$.
It   has height $n+1$,  the vertices at  distance~$d$ to the root have exactly $d$ occurrences 
of the new symbol~$s$, for $d=1, \ldots, n+1$. 

\begin{Definition}[Petals tree]
Let $[r]$ be a circular word corresponding to an Eulerian cycle in~$G(k,n)$.
We define the  {\em Petals tree }  given by the    root $[r]$ and   all the vertices in~$\widetilde C(k+1,n+1)$.
Every vertex~$[v]$ where $v$ has  exactly one occurrence of the symbol $s$ is a child of  the root~$[r]$.
And for every pair of vertices $[v]$, $[w]$ there is an edge between them 
 exactly when there is an edge between them  in~$\widetilde C(k+1,n+1)$
 and $w$ has one more occurrence  of the new symbol $s$ than~$v$.
\end{Definition}

Figure \ref{fig:tree} shows a petals tree.
 The root branches our in the  petal for vertex $00$, which  has the circular word $[002]$;
the petal for vertex $01$, which has  $[012]$;
the petal for vertex $10$, which  has $[021]$, $[022]$, $[122]$ ,$[222]$; and the petal for $11$ which has $[112]$.

\begin{figure}[t!]
  \centering
  \includegraphics[width=.85\textwidth]{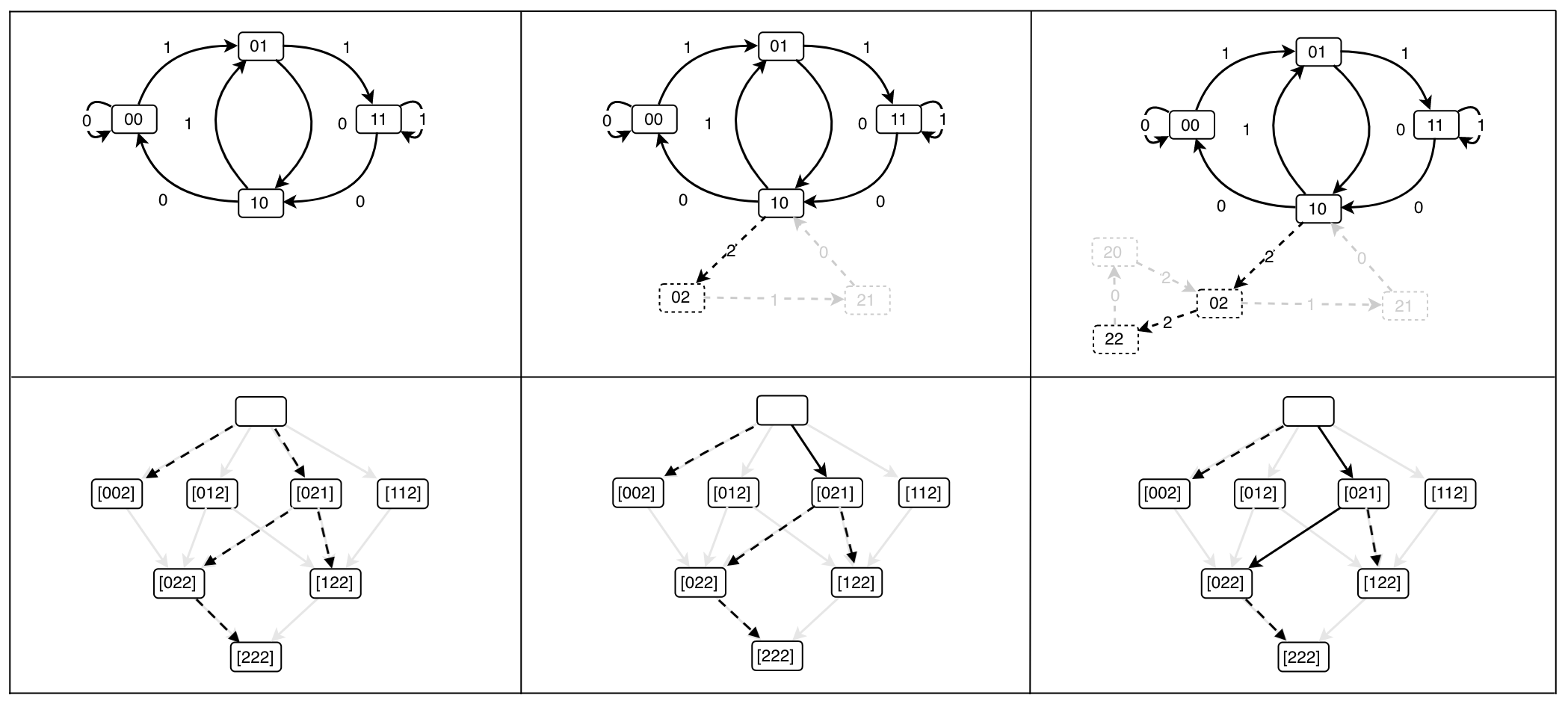}
\\\vspace{-0.1cm}
  \includegraphics[width=.85\textwidth]{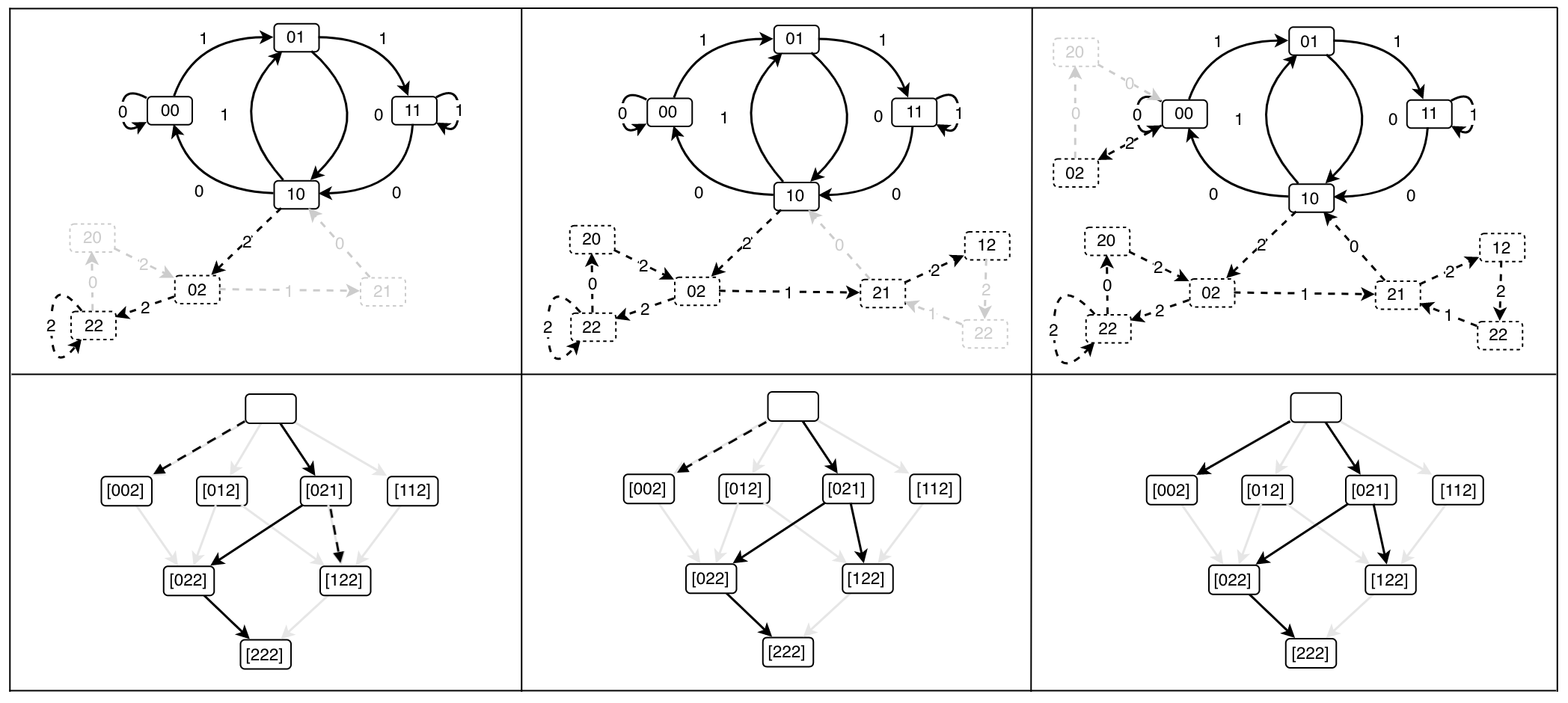}
\caption{Some steps of the construction of the extension of the de Bruiijn sequence $[11000101]$ }
  \label{fig:construction}
\end{figure}

Given  Eulerian cycle in $G(k,n)$ and a starting vertex, 
divide it in $k^n$ sections.
Choose one vertex in each section according to a perfect matching.
Fix a Petals tree as a subgraph of $A(k+1, n)$. 
The construction considers all the sections, one after the other, starting at section $0$.
At  each section the construction  inserts the  petal for a chosen  vertex, guided by the Petals tree.
Each traversed edge is added to the construction.
The construction starts at the vertex that is  the head of the first edge of section $0$.
Let $w$ be the current vertex.

{\em Case $w$ is a vertex in $G(k,n)$}:
If~$w$ is a chosen vertex in the current section and 
 the petal for $w$ has not been inserted yet then
traverse the edge labelled with symbol~$s$
and continue traversing  the petal for $w$ (which starts with  $[ws]$). 
If the petal for $w$ has already been traversed 
or $w$ is not a chosen vertex
then  continue with the traversal of edges in the current section.

{\em Case  $w$  is not a vertex in $G(k,n)$}:
If the  edge labelled with~$s$ has not been traversed yet,  $[ws]$ is a child  of the current node  the tree and 
 $[ws]$  has not been traversed yet, then traverse it. Otherwise continue with the traversal of the petal that 
$w$ was already part of.

For example,  consider this $[11000101]$  de Bruijn sequence of order~$3$ 
over  alphabet $\{0,1\}$  and the corresponding  Eulerian cycle in $G(2,2)$.
Suppose we start this cycle  at vertex~$11$ and 
 consider the four consecutive sections $(11, 10), (00, 00), (01,10) $ and $(01,11)$.
Assume   a perfect matching yields for section~$0$  the vertex~$10$ and 
for section~1 the second instance of the vertex~$ 00$.
Figure \ref{fig:construction} illustrates part the construction of the extended Eulerian cycle
for a given Petals tress, inserting the petal for the vertex~$10$ and the petal for the vertex~$00$.

\subsection{Actual proof of Theorem \ref{thm:1}}

\begin{proof}[Proof of Theorem \ref{thm:1}]
Let $e_1, ..., e_{k^n}$  be the list of edges visited by the Eulerian cycle determined by $v$ in $G(k,n-1)$,
and let $v_1, ..., v_{k^n}$ be the list of the respective head vertices.
Divide these vertices in $k^{n-1}$ consecutive sections,  each   has $k$ vertices. 
We use the Edmonds-Karp algorithm determine a vertex from each section. 
Consider the petals for $G(k, n-1)$. 
If we place one petal in each section,
two consecutive  petals can be at most $2k-1$  edges away. 
Consider now the  petals as pointed cycles in $ A(k+1, n-1)$.
A petal for  vertex $v=a_1 ... a_{n-1}$ 
starts with the outgoing edge labelled  $s$. 
Inside the petal,  for any $n-1$ consecutive edges there is one edge labelled with $s$.
The last edge of the petal is $(p,q)$ where 
$p=sa_1 ... a_{n-2}$ and $q=a_1 ... a_{n-1}$.
Thus, for any  $2k -1 + n$ consecutive edges there is at least  be one labelled with $s$.
\end{proof}

\subsection{Proof  of Proposition \ref{prop:1}}

\begin{proof}[Proof of Proposition \ref{prop:1}]
We must consider the Eulerian cycle in $G(k,n-1)$ and 
the extended Eulerian cycle in $G(k+1,n-1)$. 
The search of the maximum flow is the most expensive part of the construction.
Edmonds-Karp algorithm has running time $O(|V|^2 |E|)$, see~\cite{ek,Cormen}
for the flow graph $(V,E)$.
 In our case 
 $V$ has  a source, a sink, 
 $k^{n-1}$ vertices of the original de Bruijn graph $G(k,n-1)$
 and $k^{n-1}$  vertices for the sections. So $|V| = 2k^{n-1} +2$. 
There is an edge from the source  to each vertex in $G(k,n)$, 
 there are $k$ outgoing edges from each vertex in $G(k,n)$ to sections,
and there is one  outgoing edge to each section  the sink. 
So, $|E| =(k+2) \ k^{n-1}$. 
Then the time complexity of the  Edmonds-Karp algorithm in our graph is 
\[
O((2k^{n-1} +2 )^2  (k+2) k^{n-1}) = O(k^{3n-2}).
\] 
This completes the proof.
\end{proof}

\bibliographystyle{plain}
\bibliography{db3}

\end{document}